\definecolor{white}{RGB}{255,255,255}
\definecolor{light}{RGB}{229,229,229}
\definecolor{gray}{RGB}{191,191,191}
\definecolor{dark}{RGB}{153,153,153}
\definecolor{truered}{rgb}{1,.0,.0}
\newcommand{\red}[1]{\textcolor{truered}{#1}}
\newcommand{\invisible}[1]{\textcolor{white}{#1}}
\newtheorem{theorem}{Theorem}
\newtheorem{corollary}[theorem]{Corollary}
\newtheorem{lemma}{Lemma}
\DeclarePairedDelimiter{\floor}{\lfloor}{\rfloor}
\newcommand{\bb}{\mathbb}
\newcommand{\R}{\bb R}
\newcommand{\N}{\bb N}
\def\01{$(0,1)$}
\renewcommand{\int}{\mathop{\mathrm{int}}}
\newcommand{\conv}{\mathop{\mathrm{conv}}}
\newcommand{\st}{:}
\newcommand{\skipp}{\smallskip}
\newcounter{claim}[section]
\newenvironment{claim}[1][]
{\refstepcounter{claim} \begin{trivlist} \item[] {\bf Claim~\theclaim}\space#1 \itshape}
{\end{trivlist}}
\newenvironment{cpf}
{\begin{trivlist} \item[] {\em Proof of claim.}}{
$\diamond$ \end{trivlist}}
\newcounter{mynotes}
\newcommand{\bdot}[1]{
\draw[color=black, fill=black] (#1) circle (.05);
}
\newcommand\Square[1]{
\draw [fill=red,red] (#1) +(-0.1,-0.1) rectangle +(0.1,0.1);
}
\newcommand\lattice[1]{
\draw[ultra thin,black!30,dashed] (0,0) grid (#1,#1);
      %\foreach \x in {0,...,#1}{% Two indices running over each
      %\foreach \y in {0,...,#1}{% node on the grid we have drawn 
		%\draw[fill] (\x,\y) circle(.05);}}
}
\newcommand\levels[4]{

	\ifthenelse{#1 > 0 \AND #2 > 0}
	{
	\pgfmathparse{#1*#3 + #2*#3} \pgfmathresult \let\maxA\pgfmathresult 
	\foreach \a in {0,1,...,\maxA}{
	\begin{scope}
         \clip (0, #3) rectangle (#3, 0);
	\draw[color=#4,domain=0:#3,dashed,thick] plot(\x, \a/#2 - #1/#2 * \x)node[right] {};
	\end{scope}}
	\foreach \i in {0,1,...,#3}{
	\pgfmathparse{#2*\i} \pgfmathresult \let\value\pgfmathresult 
	\node [left,#4] at (0,\i) {$\pgfmathprintnumber{\value}$};}
	\foreach \j in {1,...,#3}{
	\pgfmathparse{#2*#3 + #1*\j} \pgfmathresult \let\val\pgfmathresult 
	\node [above,#4] at (\j, #3) {$\pgfmathprintnumber{\val}$};}
	}
	{
	\ifthenelse{#1>0 \OR #1<0}
	{
	\pgfmathparse{#3-1} \pgfmathresult \let\limit\pgfmathresult 
	\foreach \i in {0,...,\limit}{
	\pgfmathparse{#1*\i+#2*#3} \pgfmathresult \let\value\pgfmathresult 
	\node [above,#4] at (\i,#3) {$\pgfmathprintnumber{\value}$};}
	}
	{}
	\ifthenelse{#2>0 \OR #2<0}
	{
	\foreach \j in {0,...,#3}{
	\pgfmathparse{\j*#2 + #1*#3} \pgfmathresult \let\val\pgfmathresult 
	\node [right,#4] at (#3,\j) {$\pgfmathprintnumber{\val}$};}
	}
	{}
	\ifthenelse{#1 < 0 \AND #2 > 0}
	{
		\pgfmathparse{#1*#3} \pgfmathresult \let\maxA\pgfmathresult
		\pgfmathparse{#2*#3} \pgfmathresult \let\minA\pgfmathresult
		\foreach \a in {\minA,...,\maxA}{
		\begin{scope}
        		\clip (0, #3) rectangle (#3, 0);
		\draw[color=#4,domain=0:#3,dashed,thick] plot(\x, \a/#2 - #1/#2 * \x)node[right] {};
		\end{scope}}
	}
	{
		\pgfmathparse{#2*#3} \pgfmathresult \let\maxA\pgfmathresult
		\pgfmathparse{#1*#3} \pgfmathresult \let\minA\pgfmathresult
		\foreach \a in {\minA,...,\maxA}{
		\pgfmathparse{\a/#2} \pgfmathresult \let\b\pgfmathresult
		\pgfmathparse{#1/#2} \pgfmathresult \let\c\pgfmathresult
		\begin{scope}
         	\clip (0, #3) rectangle (#3, 0);
		\draw[color=#4,domain=0:#3,dashed,thick] plot(\x, \b - \c * \x)node[right] {};
		\end{scope}}
	}
	}
%\end{scope}
}
\begin{document}

%\title{On the diameter of half-integral polytopes}
\title{On the diameter of lattice polytopes}
\author{Alberto Del Pia\thanks{Department of Industrial and Systems Engineering, University of Wisconsin-Madison. Email: \url{delpia@wisc.edu}}%
\and Carla Michini\thanks{Wisconsin Institute for Discovery, University of Wisconsin-Madison. Email: \url{michini@wisc.edu}}%
}
% The correct dates will be entered by the editor

%\author{Alberto Del Pia \and Carla Michini}
%\thanks{
%Department of Industrial and Systems Engineering \& Wisconsin Institute for Discovery,
%University of Wisconsin-Madison, USA.
%E-mail: {\tt delpia@wisc.edu}.}

%\thanks{
%Wisconsin Institute for Discovery,
%University of Wisconsin-Madison, USA.
%E-mail: {\tt michini@wisc.edu}.}
%}

%\date{\today}

\maketitle

\begin{abstract}
%Half-integral polytopes play a crucial role in discrete optimization, since they often appear as relaxations of polytopes arising from combinatorial optimization problems.
In this paper we show that the diameter of a $d$-dimensional lattice polytope in $[0,k]^n$ is at most  $\floor*{\left(k-\frac{1}{2}\right)d}$.
This result implies that the diameter of a $d$-dimensional half-integral polytope is at most $\floor*{\frac 32 d}$.
We also show that for half-integral polytopes the latter bound is tight for any $d$.
%We also give a constructive proof to show that the diameter 
%We also give a constructive proof of a more general upper bound of $\floor*{\left(k-\frac{1}{2}\right)d}$ for the diameter of a $d$-dimensional lattice polytope $P \subseteq [0,k]^n$.
%, which improves on the known bound of $kd$ given by Kleinschmidt and Onn.
%Our proof is constructive, and it yields an algorithm to find a path between two given vertices of $P$, whose length does not exceed our bound.
%The running time of our algorithm is bounded by a polynomial in $k$, and in the size of the given external description of $P$.
\end{abstract}

\section{Introduction}

The \emph{$1$-skeleton} of a polyhedron $P$ is the graph whose nodes are the vertices of $P$, and that has an edge joining two nodes if and only if the corresponding vertices of $P$ are adjacent on $P$.
Given vertices $u,v$ of $P$, the \emph{distance} $\delta^P(u,v)$ between $u$ and $v$ is the length of a shortest path connecting $u$ and $v$ on the $1$-skeleton of $P$.
We may write $\delta(u,v)$ instead of $\delta^P(u,v)$ when the polyhedron we are referring to is clear from the context.
The \emph{diameter} $\delta(P)$ of $P$ is the smallest  number that bounds the distance between any pair of vertices of $P$.
%We refer to~\cite{Sch03} for basic notions and terminology on polyhedra and graphs.

%One fundamental question in linear optimization and discrete geometry concerns whether it is possible to bound the diameter of a $d$-dimensional polyhedron defined by $m$ facets with a polynomial in $d$ and $m$. This question is related to the Hirsch conjecture (1957), which stated that a $d$-dimensional polyhedron with $m$ facets cannot have diameter greater than $m-d$.
%The Hirsch conjecture was first disproven by Klee and Walkup~\cite{KleWal67} for unbounded polyhedra, and only recently by Santos~\cite{San11} for polytopes.
%It is still an open question whether the diameter can be bounded by a polynomial function of $d$ and $m$.
%The motivation for bounding the diameter of a polyhedron lies in its connection with the performance of the simplex method, as the diameter is a lower bound on the number of pivots that the simplex algorithm may need to reach an optimal vertex.
%Sometimes, the special structure of a polytope can be exploited to bound its diameter (see, \eg, \cite{balinskiRussakoff74,balinskirispoli93,balinski84,brightwell06,MicSas14}).

In this paper, we investigate the diameter of \emph{lattice polytopes}, i.e.~polytopes whose vertices are integral.
%Our motivation is that 
Lattice polytopes play a crucial role in discrete optimization and integer programming problems, where the variables are constrained to assume integer values.
%Moreover, given a rational polytope $P$ such that $\lambda$ is the least common denominator of the coordinates of the vertices of $P$, we have that $\lambda P$ is a lattice polytope.
%
Our goal is to define a bound on the diameter of a lattice polytope $P$, that depends on the dimension of $P$ and on the parameter $k = \max\{||x-y||_{\infty} \st x,y \in P\}$, in order to apply such bound to classes of polytopes for which $k$ is known to be small.
A similar approach has been followed by Bonifas et al. \cite{BonDiSEisHanNie14}, who showed that the diameter of a polyhedron $P = \{x \in \R^n \st Ax \le b\}$ is bounded by a polynomial that depends on $n$ and on the parameter $\Delta$, defined as the largest absolute value of any sub-determinant of $A$. Note that, while $\Delta$ is related to the external description of $P$, $k$ is related to its internal description. However, both $\Delta$ and $k$ are in general not polynomial in $n$ and in the number of the facet-defining inequalities of $P$.

For $k \in \N$, a \emph{$(0,k)$-polytope} $P \subseteq \R^n$ is a lattice polytope contained in $[0,k]^n$.
Naddef~\cite{Nad89} showed that the diameter of a $d$-dimensional \01-polytope is at most $d$, and this bound is tight for the hypercube $[0,1]^d$.
Kleinschmidt and Onn~\cite{KleOnn92} extended this result by proving that the diameter of a $d$-dimensional $(0,k)$-polytope cannot exceed $kd$. However, their bound is not tight for $k \ge 2$.

Our main contribution is establishing an upper bound for the diameter of a $d$-dimensional $(0,k)$-polytope, which refines the 
% 20-years old 
bound by Kleinschmidt and Onn.
\begin{theorem}
\label{th: main}
For $k \ge 2$, the diameter of a $d$-dimensional $(0,k)$-polytope is at most $\floor*{\left(k-\frac{1}{2}\right)d}$.
\end{theorem}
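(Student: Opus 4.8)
My plan is to induct on the dimension $d$, following the general strategy of Kleinschmidt and Onn but extracting a saving of $d/2$ over the course of the argument. Fix a $d$-dimensional $(0,k)$-polytope $P \subseteq [0,k]^n$ and two vertices $u, v$ whose distance we wish to bound. The base case $d=1$ is immediate (the diameter is $1 \le \lfloor k - \tfrac12\rfloor$ for $k \ge 2$). For the inductive step, the idea is to walk from $u$ toward $v$ along the $1$-skeleton while tracking how the walk interacts with the facets of the bounding box $[0,k]^n$. As long as the current vertex lies on some facet of $[0,k]^n$ — say a coordinate hyperplane $\{x_i = 0\}$ or $\{x_i = k\}$ — the face of $P$ it lies on is a lattice polytope of dimension at most $d-1$ contained in a translate of $[0,k]^{n-1}$, so the inductive hypothesis gives a short path within that face. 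The crux is to control what happens when the walk is forced into the relative interior of $P$, i.e.\ away from the box boundary.

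The key quantitative step is a \emph{one-step} lemma (analogous to the routine used for the $kd$ bound): from any vertex $w$ of $P$ that is not a minimizer of a chosen generic linear functional $c$, there is an adjacent vertex $w'$ with $c^\top w' < c^\top w$, and moreover one can choose the edge so that the change in the ``active box constraints'' is favorable. Concretely, I would pick $c$ so that the $c$-minimal face of $P$ is the target vertex $v$, and then argue that along the monotone walk $u = w_0, w_1, \dots, w_m = v$, each time the walk leaves the boundary of the box in some coordinate $i$ (i.e.\ $x_i$ moves strictly between $0$ and $k$), that coordinate contributes, over the whole walk, a total of at most $k - \tfrac12$ ``units'' rather than the naive $k$. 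Summing the contributions coordinate by coordinate, and using that $P$ is $d$-dimensional so at most $d$ coordinate directions are ever relevant, yields the bound $\lfloor (k-\tfrac12)d\rfloor$. The half-unit saving should come from a parity or integrality argument: an edge of a lattice polytope between two points with $x_i \in \{1, \dots, k-1\}$ cannot have its $i$-th coordinate change by less than $1$ in integer steps, so a full ``traversal'' of the interval $(0,k)$ and back, or a traversal that starts or ends on the boundary, can be amortized to cost strictly less than $k$ per coordinate on average — more precisely, at most $k - \tfrac12$ after taking the floor over $d$ such coordinates.

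The main obstacle, and where the real work lies, is making the ``contribution per coordinate'' accounting rigorous: an edge of $P$ typically changes many coordinates at once, so one cannot literally charge edges to single coordinates independently. I would handle this by a potential-function argument — define $\Phi(w) = \sum_{i} f(w_i)$ for a suitable concave piecewise-linear $f$ with $f(0) = f(k) = 0$ and $f$ increasing then decreasing (e.g.\ $f(t) = \min\{t, k-t, \text{something}\}$ scaled so that each monotone $c$-decreasing edge drops $\Phi$ by at least $1$ until the walk reaches a face on the box boundary) — and separately bound the length of the portion of the walk spent on each lower-dimensional box face via induction. Balancing the ``interior'' portion (controlled by $\Phi$, whose total variation is at most $(k - \tfrac12)$ times the number of active coordinates) against the ``boundary'' portions (controlled inductively, each saving because the ambient box loses a dimension) is the delicate part, and getting the constants to close up to exactly $\lfloor(k-\tfrac12)d\rfloor$ rather than something weaker is where I expect to spend most of the effort. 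A cleaner alternative I would also try: prove directly by induction on $d$ that $\delta(P) \le \lfloor(k-\tfrac12)d\rfloor$ by splitting $P$ with a hyperplane $\{x_i = j\}$ for a well-chosen coordinate $i$ and integer level $j$, bounding the diameter of each piece and the cost of crossing from one piece to the other — the crossing cost being where the $\tfrac12$-saving is harvested, since crossing a single integer level costs only a bounded number of edges while reducing the range of $x_i$ on the relevant side.
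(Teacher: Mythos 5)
There is a genuine gap: your plan identifies, but never resolves, precisely the point on which the whole bound rests. The half-unit saving is asserted to come from an amortized ``contribution per coordinate'' along a single $c$-monotone walk, policed by a potential $\Phi(w)=\sum_i f(w_i)$; but you give no reason why each monotone edge should decrease $\Phi$ by at least $1$ (edges of a lattice polytope can move many coordinates simultaneously and by large integral amounts, and a generic $c$-decreasing step can easily increase such a concave coordinate-wise potential), nor how the ``interior'' and ``boundary'' portions are to be balanced to close at exactly $\floor*{(k-\tfrac12)d}$. Your alternative of slicing $P$ by a hyperplane $\{x_i=j\}$ at an intermediate integer level has its own problem: $P\cap\{x_i\le j\}$ is generally not a face of $P$, its vertices need not be vertices of $P$ and its edges need not be edges of $P$, so diameters of the two pieces do not bound distances in the $1$-skeleton of $P$, and no ``crossing cost'' argument is supplied. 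In short, the proposal is a program whose central step is exactly the open part.

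For comparison, the paper harvests the saving not by amortizing over a monotone walk but by a case dichotomy proved with two elementary lemmas (if $F=\{x\in P: cx=\gamma\}$ with $c$ integral and $\gamma=\min\{cx: x\in P\}$, then $\delta(u,F)\le cu-\gamma$ and $\delta(u,v)\le\delta(F)+cu+cv-2\gamma$). One first reduces to $P$ full-dimensional, touching every facet of $[0,k]^d$, with $u+v=k^d$ and $u\in\{0,k\}^d$ (the last reduction uses $c=e^i+e^j$ on a neighbor of $u$ that moves in two coordinates, which is where dimension drops by $2$ while only $2k-1$ steps are spent). Then, descending from $u$ with $u_1=k$ to a vertex $u'$ of the face $\{x\in P: x_1=0\}$ in at most $k$ steps, either $u'=(0,u_2,\dots,u_d)$, in which case $\conv\{u,u'\}$ is an edge of the box and hence of $P$ and only one step is spent, or $u'_i+v_i\ne k$ for some $i\ge 2$, in which case a further coordinate functional inside the face saves again. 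Each branch yields $\delta(u,v)\le\delta_k^{d-1}+k-1$ or $\delta(u,v)\le\delta_k^{d-2}+2k-1$, and these two recursions are exactly what telescope to $\floor*{(k-\tfrac12)d}$. Some mechanism of this kind --- a saving of $1$ for each drop of one dimension, or of $1$ for each drop of two dimensions combined with a $2k$-step budget --- is what your write-up still needs to supply.
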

%
%\begin{proof}[Proof of Corollary~\ref{k=2}]
%This section is devoted to proving that the bound given in Theorem~\ref{th: main} is tight for $k=2$.
%
%Theorem~\ref{th: main} (with $k=2$), immediately implies that the diameter of a $d$-dimensional $(0,2)$-polytope is at most $\floor*{\frac 32 d}$.
%For any natural number $d$, we construct a $d$-dimensional $(0,2)$-polytope whose diameter equals $\floor*{\frac 32 d}$.
%\end{proof}
%
The proof of Theorem~\ref{th: main} is elementary, as it combines an induction argument with basic tools from linear programming and polyhedral theory. 
%Our proof is ,
%technique provides new insight on the structure of lattice polytopes,
%Moreover it allows us to . % for the diameter of lattice polytopes.
%Moreover, our proof is constructive, and it yields an algorithm that finds a path between two given vertices, whose length does not exceed $\floor*{\left(k-\frac{1}{2}\right)d}$.
%For a $(0,k)$-polytope $P = \{x : Ax \le b\}$, the algorithm that we propose runs in time bounded by a polynomial in $k$, and in the size of $A,b$.
Our proof is also constructive, since it shows how to build a path between two given vertices of $P$, whose length does not exceed our bound.
%Our proof is also constructive, and it yields an algorithm to find a path between two given vertices of $P$, whose length does not exceed our bound.
%The running time of the algorithm is bounded by a polynomial in $k$, and in the size of the given external description of $P$.

For $(0,2)$-polytopes, we show that the upper bound given in Theorem~\ref{th: main} is tight for any $d$.
%Our main contribution is establishing a bound for the diameter of a $d$-dimensional $(0,2)$-polytope that is tight for any $d$. 
\begin{corollary}
\label{k=2}
The diameter of a $d$-dimensional $(0,2)$-polytope is at most $\floor*{\frac 32 d}$.
Moreover, for any natural number $d$, there exists a $d$-dimensional $(0,2)$-polytope attaining this bound.
\end{corollary}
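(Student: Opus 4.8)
The plan is to derive the upper bound directly from Theorem~\ref{th: main} by setting $k=2$: this gives $\delta(P)\le\floor*{(2-\tfrac12)d}=\floor*{\tfrac32 d}$ for every $d$-dimensional $(0,2)$-polytope $P$. So the entire content of the corollary is the tightness claim, namely the construction, for each $d\in\N$, of a $d$-dimensional $(0,2)$-polytope $P_d$ with $\delta(P_d)=\floor*{\tfrac32 d}$. Because the bound increases by $3$ every time $d$ increases by $2$, it is natural to build $P_d$ as a product-like construction from a fixed low-dimensional gadget: if $Q$ is a $2$-dimensional $(0,2)$-polytope with two vertices $u,v$ at distance $3$, then iterating products of copies of $Q$ should give, in dimension $2m$, a polytope with two vertices at distance $3m=\floor*{\tfrac32\cdot 2m}$, and gluing on one extra hypercube edge handles the odd case $d=2m+1$ (distance $3m+1=\floor*{\tfrac32(2m+1)}$). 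A convenient choice for the gadget $Q$ is the regular hexagon with vertices at the $6$ points of $\{0,1,2\}^2$ on the "antidiagonal band" — concretely $\conv\{(1,0),(2,0),(0,1),(2,2),(0,2),(1,2)\}$ — or any hexagonal $(0,2)$-polygon in which the two vertices $(2,0)$ and $(0,2)$ are separated by $3$ edges; one checks by inspection that its diameter is exactly $3$.

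The key steps, in order, are: (i) state the gadget $Q\subseteq[0,2]^2$, list its vertices, and verify $\dim Q=2$ and $\delta^Q((2,0),(0,2))=3$ by exhibiting the two length-$3$ paths and arguing no shorter path exists (a hexagon has diameter $3$); (ii) for $d=2m$ set $P_{2m}:=Q^m=Q\times\cdots\times Q$ and for $d=2m+1$ set $P_{2m+1}:=Q^m\times[0,2]$ (one could also use $[0,1]$, but staying in $[0,2]$ is cleaner), noting $P_d\subseteq[0,2]^d$ and $\dim P_d=d$; (iii) invoke the standard fact that the $1$-skeleton of a product of polytopes is the Cartesian product of the $1$-skeletons, so that $\delta^{P\times R}((u,u'),(v,v'))=\delta^P(u,v)+\delta^R(u',v')$, hence vertices of $P_d$ are adjacent iff they differ in exactly one factor and are adjacent there; (iv) choose the target vertices to be $w:=((2,0),\dots,(2,0))$ and $w':=((0,2),\dots,(0,2))$ (appending the endpoints $0$ and $2$ of the last interval in the odd case) and conclude $\delta^{P_d}(w,w')=3m$ when $d=2m$ and $3m+1$ when $d=2m+1$, which in both cases equals $\floor*{\tfrac32 d}$; (v) combine with the upper bound to get $\delta(P_d)=\floor*{\tfrac32 d}$.

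The only real work is the bookkeeping in step (i)–(ii): one must make sure the chosen hexagon genuinely has dimension $2$, has all six vertices in $\{0,1,2\}^2$, and has the two designated vertices at graph-distance exactly $3$ rather than $2$ (so the hexagon must not be "flattened" into a quadrilateral or pentagon, and the two vertices must be opposite). I expect the main — though still routine — obstacle to be verifying the product formula for the $1$-skeleton at the level of rigor the paper wants, and checking that no pair of vertices of $P_d$ is at distance more than $\floor*{\tfrac32 d}$ is unnecessary here since that is exactly Theorem~\ref{th: main}; we only need the \emph{lower} bound $\delta(P_d)\ge\floor*{\tfrac32 d}$, which follows from the product-distance identity applied to the single pair $(w,w')$. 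A small edge case to mention explicitly is $d=1$: here $P_1=[0,2]$, whose two vertices are adjacent, giving diameter $1=\floor*{\tfrac32}$, consistent with $m=0$ in the odd formula.
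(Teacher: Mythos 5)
Your overall route is exactly the paper's: the upper bound is Theorem~\ref{th: main} with $k=2$, and the lower bound comes from taking cartesian products of a two-dimensional hexagonal gadget in $[0,2]^2$ (diameter $3$) with copies of the segment $[0,2]$ for odd $d$, using the additivity of distances on products of $1$-skeletons. The paper does precisely this with $H_2=\conv\{(0,0),(1,0),(0,1),(2,1),(1,2),(2,2)\}$ and $H_1=[0,2]$, citing $\delta(P_1\times P_2)=\delta(P_1)+\delta(P_2)$ as well known, so there is no methodological difference.

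However, there is a concrete flaw in your instantiation, and it is exactly at the step you yourself identified as the one needing care. The set you propose, $\conv\{(1,0),(2,0),(0,1),(2,2),(0,2),(1,2)\}$, is not a hexagon: the three points $(0,2),(1,2),(2,2)$ are collinear, so $(1,2)$ is not a vertex and the hull is the pentagon with vertices $(1,0),(2,0),(2,2),(0,2),(0,1)$. In that pentagon your designated pair $(2,0),(0,2)$ is at graph distance $2$ (via $(2,2)$), and the polygon's diameter is $2$, so the product construction built on it only certifies a lower bound of roughly $2\lfloor d/2\rfloor$, not $\floor*{\tfrac32 d}$; the tightness claim would not be established. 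The fix is immediate: replace $(2,2)$ by $(2,1)$, giving the genuine hexagon $\conv\{(1,0),(2,0),(2,1),(1,2),(0,2),(0,1)\}$ in which $(2,0)$ and $(0,2)$ are opposite (distance $3$), or simply use the paper's hexagon $H_2$ with opposite vertices $(0,0)$ and $(2,2)$. With a correct gadget, the remaining steps (product of skeletons, choice of the antipodal pair, the odd-dimensional factor $[0,2]$, and the $d=1$ edge case) are fine and match the paper's argument.
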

%While the upper bound in Corollary~\ref{k=2} follows from Theorem~\ref{th: main}, the lower bound 
%Clearly, $P$ is a half-integral polytope if and only if $2P = \{2x \st x \in P\}$ is a $(0,2)$-polytope.
%As $P$ and $2P$ have the same $1$-skeleton and the same dimension,
%Corollary~\ref{k=2} implies that the diameter of a half-integral polytope is at most $\floor*{\frac 32 d}$, and the bound is tight.
%The strength of Corollary~\ref{k=2} is that there is an easy construction to define in any dimension a half-integral polytope attaining our bound. This construction is based on the cartesian products of polytopes of dimension $1$ and $2$.
The lower bound of Corollary~\ref{k=2} 
follows by an easy construction based on the cartesian product of polytopes  of dimension one and two.
It is well-known that, given two polytopes $P_1$ and $P_2$, their cartesian product $P_1 \times P_2$ satisfies $\delta(P_1 \times P_2) = \delta(P_1) +  \delta(P_2)$.
Now, let $H_{1} = [0,2] $ and $H_{2} = \conv \{(0,0), (1,0), (0,1), (2,1), (1,2), (2,2)\}.$
For even $d$, let $H_d = (H_2)^{d/2}$, and for odd $d$, let $H_d = H_{d-1} \times H_1$.
Thus for all $d \in \N$, $H_{d}$ is a $d$-dimensional $(0,2)$-polytope, with $\delta(H_{d}) = \floor*{\frac{3}{2}d}$.

Corollary~\ref{k=2} has important implications for the diameter of half-integral polytopes.
\emph{Half-integral polytopes} are polytopes whose vertices have components in $\left\{0,\frac{1}{2},1\right\}$, and they are affinely equivalent to $(0,2)$-polytopes.
The class of half-integral polytopes is very rich, as many half-integral polytopes appear in the literature as relaxations of \01-polytopes arising from combinatorial optimization problems. In some cases, while the \01-polytope defined as the convex hull of the feasible solutions to the combinatorial problem has exponentially many facets, there is a linear relaxation, defined by a polynomial number of constraints, that yields a half-integral polytope.
%Gaining insights on the structure of half-integral polytopes is also relevant for approximation algorithms.
%In fact, if $P = \{x \in \R^n : Ax \ge b, \; x \ge 0 \}$ is a half-integral polytope, then for any non-negative $A$ and $c \in \R^n$, one can approximate the optimum of $\min\{cx : x \in P \cap \{0,1\}^n \}$ within a relative factor of $2$, by solving the linear programming relaxation $\min\{cx : x \in P \}$, and rounding to $1$ all variables with value at least $1/2$.

There are several classes of polytopes that are known to be half-integral, such as the fractional matching polytope and the fractional stable set polytope \cite{Bal65}, the linear relaxation of the boolean quadric polytope and the rooted semimetric polytope \cite{Pad89} (see also \cite{Sch03} and \cite{DezLau97}).
An interesting class of half-integral polytopes arises from totally dual half-integral systems,
%A linear system $Ax \le b$ is called \emph{totally dual half-integral (TDI/2)} if the minimum of the dual problem of $\max\{cx : Ax \le b\}$ has a half-integral optimal solution \footnote{A vector $y$ is called \emph{half-integral} if all coordinates of $2y$ are integral.}, for every integral vector $c$ for which the optimum is finite (see e.g.~\cite{Sch03}).
%From the Edmonds-Giles theorem \cite{EdmGil77}, we deduce that if $Ax \le b$ is TDI/2 and $b$ is integral, then $\max\{cx : Ax \le b\}$  has a half-integral optimal solution, for every integral vector $c$ for which the optimum is finite.
%Thus, a polytope contained in $[0,1]^n$ described by a TDI/2 system is half-integral.
such as the fractional stable matching polytope \cite{AbeRot94,CheDinHuZan12}, and the fractional matroid matching polytope \cite{Sch86,GijPap13}.

The rest of the paper is devoted to proving Theorem~\ref{th: main}.
%In Section~\ref{sec: ingredients} we present some lemmas that will be crucial to prove our main results.
%, and we give a short proof of the bound by Kleinschmidt and Onn.
%In Section~\ref{sec: proof} we prove our upper bound for the diameter of $(0,k)$-polytopes.
% and we present an algorithm that constructs a short path between any two vertices.
%In Section~\ref{sec: tight} we show that our bound is tight for $k=2$.
%Finally, in Section~\ref{sec: conclusions}, some conclusions are drawn.

\section{Proof of main result}
In order to bound the diameter of a non full-dimensional $(0,k)$-polytope $P \subseteq \R^n$,
we define the \emph{projection of $P$ onto the $i$-coordinate hyperplane} as the polytope 
$$\{ \bar x \in \R^{n-1}: \exists \; x \in P \text{ with } x_j=\bar x_j \text{ for } j=1,\dots,i-1, \, x_j = \bar x_{j-1}  \text{ for } j=i+1,\dots,n\}.$$
That is, we simply drop the $i$-th coordinate from all vectors in $P$.
Since integral vectors are mapped into integral vectors, the next lemma follows from Theorem 3.3 in \cite{NadPul84}.
\begin{lemma}
\label{lem: full dim}
Let $P$ be a $d$-dimensional $(0,k)$-polytope in $\R^n$ with $d \ge 1$.
Then there exists a full-dimensional $(0,k)$-polytope in $\R^d$ with the same $1$-skeleton as $P$.
\end{lemma}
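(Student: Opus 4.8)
The plan is to shrink the ambient space one coordinate at a time, each time dropping a coordinate in a way that leaves the $1$-skeleton untouched, until the ambient dimension equals $d$. So suppose $P$ is not already full-dimensional, that is, $d<n$, and let $L\subseteq\R^n$ be the $d$-dimensional linear subspace parallel to the affine hull of $P$. Since $d<n$, the subspace $L$ is proper, so the standard basis vectors $e_1,\dots,e_n$ cannot all lie in $L$; I would fix an index $i$ with $e_i\notin L$. The key point is that the map $\pi_i$ deleting the $i$-th coordinate is injective on the affine hull of $P$: its kernel is $\R e_i$, which meets $L$ only at the origin. Hence $\pi_i$ restricts to a bijective affine map from the affine hull of $P$ onto the affine hull of $\pi_i(P)$, and therefore to an affine isomorphism between $P$ and $\pi_i(P)$; in particular it carries the face lattice of $P$ onto that of $\pi_i(P)$ and induces a graph isomorphism of their $1$-skeletons, so all distances $\delta$ and the diameter are preserved.

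Next I would verify that $\pi_i(P)$ is again a $(0,k)$-polytope, now sitting in $\R^{n-1}$. Its dimension is still $d$ because $\pi_i|_P$ is an affine isomorphism; it is contained in $[0,k]^{n-1}$ because $\pi_i\big([0,k]^n\big)=[0,k]^{n-1}$; and its vertices are the $\pi_i$-images of the vertices of $P$ (again since $\pi_i|_P$ is an affine isomorphism), which are integral, so deleting one integral coordinate keeps them integral. With this in hand, the lemma follows by induction on $n-d$: the base case $n=d$ is immediate, and the inductive step is exactly the projection just described, which decreases $n-d$ by one while preserving the dimension, the $(0,k)$ property, and the $1$-skeleton. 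This is essentially the statement of Theorem~3.3 in \cite{NadPul84}; the only thing to add is the elementary observation that coordinate projections respect integrality and the box $[0,k]^n$.

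I do not anticipate a genuine obstacle. The point that needs a word of justification is that a bijective affine map between two polytopes is a combinatorial isomorphism, but this is immediate: injectivity of $\pi_i$ on the affine hull makes it invertible there, so both $\pi_i$ and its inverse are affine and send faces to faces. The other thing one must be careful about is the choice of the eliminated coordinate: deleting an arbitrary coordinate may merge distinct vertices or introduce new adjacencies, so it is essential to pick $e_i\notin L$ — which the count $d<n$ always allows — rather than an arbitrary index.
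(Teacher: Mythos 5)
Your proof is correct and follows essentially the same route as the paper: the paper simply invokes Theorem~3.3 of Naddef and Pulleyblank (noting that the coordinate-deletion projection preserves integrality), and your argument — choosing $e_i$ outside the linear space parallel to $\mathop{\mathrm{aff}}(P)$ so that deleting the $i$-th coordinate is an affine isomorphism on $P$, then inducting on $n-d$ — is precisely the content of that cited result, spelled out. Your remark that the deleted coordinate must be chosen with $e_i\notin L$ (not arbitrarily) is the right point of care.
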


%Before proving Theorem~\ref{th: main}, we present some lemmas
For $d,k\in \N$, we define $\delta_k^d$ to be the maximum possible diameter of a $(0,k)$-polytope of dimension at most $d$, i.e.
$$
\delta_k^d = \max \{\delta(P) \st \text{$P$ is a $(0,k)$-polytope of dimension at most $d$}\}.
$$
Note that the maximum in the definition of $\delta_k^d$ always exists.
In fact, it follows from Lemma~\ref{lem: full dim} that the number of vertices of a $d$-dimensional $(0,k)$-polytope is at most $(k+1)^d$, thus also its diameter is upper bounded by $(k+1)^d$, which is a number independent on the dimension of the ambient space of $P$.
Moreover, for fixed $k$, the value $\delta_k^d$ is clearly non-decreasing in $d$.

We now present some lemmas that will be used to prove Theorem~\ref{th: main}.
These results follow by applying the ideas introduced by Kleinschmidt and Onn in \cite{KleOnn92}.
%We first present some lemmas that will be crucial to prove our main result.
The next lemma shows how to bound the distance $\delta(u,F)$ between a vertex $u$ of a lattice polytope $P$ and a face $F$ of $P$, that is defined as $\delta(u,F)=\min \{\delta(u,v) : v \text{ is a vertex of } F\}$.
We say that two vertices $u,v$ of a polytope are \emph{neighbors} if $\delta(u,v)=1$.
We denote by $e^i$, for $i=1,\dots,n$, the $i$-th vector of the standard basis of $\R^n$.

\begin{lemma}
\label{lem: bounds}
Let $P$ be a lattice polytope, and let $u$ be a vertex of $P$.
Let $c$ be an integral vector, $\gamma=\min \{cx \st x \in P\}$, and $F =\{x \in P \st cx=\gamma\}$.
%Let $F =\{x \in P \st cx=\gamma\}$ be a face of $P$ such that $\gamma=\min \{cx \st x \in P\}$.
%Then there exists a vertex $u'$ of $F$ such that $\delta(u,u') \le cu-\gamma$.
Then $\delta(u,F) \le cu-\gamma$.
\end{lemma}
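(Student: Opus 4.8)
The plan is to induct on the quantity $cu - \gamma$, which is a nonnegative integer since $c$ is integral, $u$ is a lattice point, and $\gamma = \min\{cx : x \in P\}$ (so $\gamma$ is attained at a vertex, hence is itself an integer, and $cu \ge \gamma$). The base case $cu - \gamma = 0$ is immediate: then $u$ itself lies in $F$, so $\delta(u,F) = 0$. For the inductive step, assume $cu - \gamma \ge 1$; I want to produce a neighbor $u'$ of $u$ on $P$ with $cu' \le cu - 1$. If I can do this, then $cu' - \gamma \le cu - \gamma - 1$, the inductive hypothesis gives $\delta(u', F) \le cu' - \gamma \le cu - \gamma - 1$, and the triangle inequality on the $1$-skeleton yields $\delta(u,F) \le 1 + \delta(u',F) \le cu - \gamma$, as desired.

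So the heart of the argument is: \emph{if $u$ is a vertex of $P$ with $cu > \gamma$, then $u$ has a neighbor $u'$ with $cu' < cu$}. First I would handle the case $u \notin F$, i.e. $cu > \gamma$. Consider minimizing the linear functional $cx$ over $P$ by the simplex method (or any edge-following argument) starting at $u$: since $u$ is not optimal, there is an edge of $P$ incident to $u$ along which $cx$ strictly decreases. Let $u'$ be the other endpoint of that edge (the edge is bounded because $P$ is a polytope). Then $u'$ is a vertex of $P$ adjacent to $u$, so $\delta(u,u') = 1$, and $cu' < cu$. Because $cu'$ and $cu$ are both integers, in fact $cu' \le cu - 1$. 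This is exactly the step needed above.

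The only subtlety to be careful about is the integrality bookkeeping: I must make sure $\gamma$ is an integer (it is attained at a vertex of the lattice polytope $P$, and $c$ is integral) so that $cu - \gamma \in \Z_{\ge 0}$ and the induction is well-founded, and I must use that any vertex $u'$ adjacent to $u$ is again a lattice point so that $cu'$ is an integer and the strict inequality $cu' < cu$ upgrades to $cu' \le cu - 1$. I do not expect a genuine obstacle here — the lemma is essentially a restatement of the Kleinschmidt–Onn technique — but the cleanest exposition is the induction on $cu-\gamma$ with the ``decrease along an edge'' step as the inductive engine, rather than trying to bound the whole path length in one shot.
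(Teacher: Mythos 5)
Your proposal is correct and follows essentially the same route as the paper's proof: induction on the integer $cu-\gamma$, producing a neighbor $u'$ of $u$ with $cu'<cu$ (the paper cites a standard edge-following fact for this), upgrading to $cu'\le cu-1$ by integrality, and concluding via the induction hypothesis and the triangle inequality. Your extra bookkeeping that $\gamma$ is an integer (attained at a lattice vertex) is exactly what makes the induction well-founded, just as in the paper.
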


\begin{proof}
We show that there exists a vertex $v$ of $F$ such that $\delta(u,v) \le cu-\gamma$.
We prove this statement by induction on the integer value $cu-\gamma \ge 0$.
The statement is trivial for $cu-\gamma =0$, as we can set $v=u$.
Assume $cu-\gamma \ge 1$.
%OLD:
%Because $F$ is nonempty, there exists a vector $v \in F$ such that $cu-cv=cu-\gamma \ge 1$.
%Thus, by Lemma~\ref{lem: levels}, there exists a neighbor $u'$ of $u$ with $cu' \le cu-1$.
%NEW:
Since $F$ is nonempty, there exists a neighbor $u'$ of $u$ with $cu' < cu$ 
(see, e.g., \cite{Bro83}).
The integrality of $c$, $u'$ and $u$, implies $cu' \le cu-1$.
As $cu'-\gamma \le cu-\gamma -1$,
by the induction hypothesis there exists a vertex $v$ of $F$ such that $\delta(u',v) \le cu'-\gamma$.
Therefore $\delta(u,v) \le \delta(u,u') + \delta(u',v) \le 1+ cu'-\gamma \le cu-\gamma$.
\end{proof}
 
Given two vertices $u$ and $v$ and a face $F$ of a lattice polytope $P$, we have $\delta(u,v) \le \delta(u,F) + \delta(v,F) + \delta(F)$.
%If $F =\{x \in P \st cx=\gamma\}$ is such that $\gamma=\min \{cx \st x \in P\}$,
By applying Lemma~\ref{lem: bounds} to both $u$ and $v$, we obtain an upper bound on $\delta(u,v)$ that depends on $F$: 
%$\delta(u,v) \le \delta(F) + cu + cv - 2\gamma$.

\begin{lemma} \label{lem: ell}
Let $P$ be a lattice polytope, and let $u,v$ be vertices of $P$.
Let $c$ be an integral vector, $\gamma=\min \{cx \st x \in P\}$, and $F =\{x \in P \st cx=\gamma\}$.
Then $\delta(u,v) \le \delta(F) + cu + cv - 2\gamma$.
\end{lemma}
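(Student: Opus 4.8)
The plan is to chain three bounds via the triangle inequality in the $1$-skeleton of $P$, and then invoke Lemma~\ref{lem: bounds} twice. First I would record the observation that, since $F$ is a face of $P$, the $1$-skeleton of $F$ sits inside the $1$-skeleton of $P$: every vertex of $F$ is a vertex of $P$, and every edge of $F$ (a $1$-dimensional face of $F$) is a face of $P$, hence an edge of $P$. Therefore any path in $F$ between two vertices $w,w'$ of $F$ is also a path in $P$, so $\delta^P(w,w') \le \delta^F(w,w') \le \delta(F)$. This is the step that makes the face diameter the correct quantity to substitute.

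Next I would note that $F$ is nonempty and has finitely many vertices, because $P$ is a polytope and $c$ attains its minimum on $P$; hence there exist vertices $w_u$ and $w_v$ of $F$ attaining $\delta(u,F)$ and $\delta(v,F)$ respectively. Concatenating a shortest $u$--$w_u$ path, a shortest $w_u$--$w_v$ path, and a shortest $w_v$--$v$ path in the $1$-skeleton of $P$ yields
$$\delta(u,v) \le \delta(u,w_u) + \delta^P(w_u,w_v) + \delta(w_v,v) \le \delta(u,F) + \delta(F) + \delta(v,F),$$
which is precisely the inequality stated in the text just before the lemma.

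Finally I would apply Lemma~\ref{lem: bounds} once to the vertex $u$ and once to the vertex $v$, using the same data $c$, $\gamma=\min\{cx \st x\in P\}$, $F=\{x\in P \st cx=\gamma\}$: this gives $\delta(u,F)\le cu-\gamma$ and $\delta(v,F)\le cv-\gamma$. Substituting these two estimates into the previous display produces
$$\delta(u,v) \le \delta(F) + (cu-\gamma) + (cv-\gamma) = \delta(F) + cu + cv - 2\gamma,$$
as claimed.

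I do not expect any real obstacle: the argument is a direct composition of the triangle inequality with Lemma~\ref{lem: bounds}. The only points meriting a careful sentence are that $F\neq\emptyset$ (so $\delta(u,F)$ and $\delta(v,F)$ are well defined), which is immediate from compactness of $P$, and that distances measured in $P$ between vertices of $F$ are bounded by $\delta(F)$, which follows from the fact that edges of $F$ are edges of $P$.
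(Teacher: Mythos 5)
Your proposal is correct and follows exactly the route the paper takes: the inequality $\delta(u,v) \le \delta(u,F) + \delta(v,F) + \delta(F)$ (stated in the text right before the lemma, justified as you do by the fact that the $1$-skeleton of a face embeds in that of $P$), combined with two applications of Lemma~\ref{lem: bounds}, one to $u$ and one to $v$. Your added care about $F\neq\emptyset$ and edges of $F$ being edges of $P$ only makes explicit what the paper leaves implicit.
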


%If $P$ is a $(0,k)$-polytope in $\R^n$ and for $i \in \{1,\dots, n\}$ we define $l=\min\{x_i :x \in P\}$ and $h=\max\{x_i :x \in P\}$,

Let $P$ be a $(0,k)$-polytope in $\R^n$ and let $l=\min\{x_i :x \in P\}$ and $h=\max\{x_i :x \in P\}$ for some $i \in \{1,\dots, n\}$. 
%$x_i \in [l,h]$ for all $x \in P$,
We can bound the distance between any two vertices $u$ and $v$ of $P$ by bounding their distances from the faces $L=\{x \in P : x_i = l\}$ and $H=\{x \in P : x_i = h\}$. If $u_i + v_i \le l+h$, Lemma \ref{lem: ell} applied with $F=L$, $c = e^i$ and $\gamma = l$ implies $\delta(u,v) \le \delta(L) + (h-l)$. If $u_i + v_i \ge l+h$, Lemma \ref{lem: ell} applied with $F=H$, $c = -e^i$ and $\gamma = -h$ implies $\delta(u,v) \le \delta(H) + (h-l)$. Since $L$ and $H$ are $(0,k)$-polytopes of dimension at most $n-1$, we have that both $\delta(L)$ and $\delta(H)$ are at most $\delta_k^{n-1}$.

\begin{lemma}
\label{lem: ind}
Let $P$ be a $(0,k)$-polytope in $\R^n$, and suppose that there exists $i \in \{1, \dots, n\}$ such that  $x_i \in [l,h]$ for every $x \in P$.
Then $\delta(P) \le  \delta_k^{n-1} + (h-l)$.
\end{lemma}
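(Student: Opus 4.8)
The plan is to bound $\delta(u,v)$ for an arbitrary pair of vertices $u,v$ of $P$ by routing a path through one of the two faces of $P$ on which the $i$-th coordinate is extremal, and then to take the maximum over all such pairs. First I would reduce to the case in which $l=\min\{x_i \st x\in P\}$ and $h=\max\{x_i \st x\in P\}$: if $x_i\in[l,h]$ holds for every $x\in P$ with $l,h$ not necessarily tight, then replacing them by these extremal values only decreases $h-l$, so it suffices to treat the tight case. Assuming $l,h$ tight, I would set $L=\{x\in P \st x_i=l\}$ and $H=\{x\in P \st x_i=h\}$; these are faces of $P$ (minimizing, resp.\ maximizing, the functional $e^i$) lying in the hyperplanes $\{x_i=l\}$ and $\{x_i=h\}$, hence they are $(0,k)$-polytopes of dimension at most $n-1$, so $\delta(L)\le\delta_k^{n-1}$ and $\delta(H)\le\delta_k^{n-1}$.

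The crux is a dichotomy according to the sign of $u_i+v_i-(l+h)$, i.e.\ according to whether the pair $(u,v)$ lies, on average in coordinate $i$, below or above the midpoint of the slab $[l,h]$. If $u_i+v_i\le l+h$, I would apply Lemma~\ref{lem: ell} with $c=e^i$, so that $\gamma=l$ and $F=L$, obtaining $\delta(u,v)\le\delta(L)+u_i+v_i-2l\le\delta(L)+(h-l)$. If instead $u_i+v_i\ge l+h$, I would apply Lemma~\ref{lem: ell} with $c=-e^i$, so that $\gamma=-h$ and $F=H$, obtaining $\delta(u,v)\le\delta(H)+2h-u_i-v_i\le\delta(H)+(h-l)$. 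In either case $\delta(u,v)\le\delta_k^{n-1}+(h-l)$, and since $u,v$ were arbitrary, this is the claimed bound on $\delta(P)$.

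I do not expect a genuine obstacle here, since essentially all the work is already packaged inside Lemma~\ref{lem: ell} (which in turn rests on Lemma~\ref{lem: bounds} together with the inequality $\delta(u,v)\le\delta(u,F)+\delta(v,F)+\delta(F)$). The one point that really needs the right idea is the choice of which face to route through: routing through a single fixed face would incur a cost of up to $2(h-l)$ from the coordinate-$i$ terms, and it is precisely the midpoint dichotomy that guarantees the cheaper of the two faces contributes at most $h-l$. I would also quickly verify the degenerate situations — $P$ a single vertex, or $l=h$ so that $L=H=P$ — in which the asserted inequality holds trivially.
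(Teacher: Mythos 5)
Your proof is correct and follows essentially the same route as the paper: the paper also sets $L=\{x\in P: x_i=l\}$, $H=\{x\in P: x_i=h\}$, splits on whether $u_i+v_i\le l+h$ or $u_i+v_i\ge l+h$, and applies Lemma~\ref{lem: ell} with $c=e^i$ (so $\gamma=l$, $F=L$) or $c=-e^i$ (so $\gamma=-h$, $F=H$), using that $L$ and $H$ are $(0,k)$-polytopes of dimension at most $n-1$. Your extra remarks on tightening $l,h$ and on degenerate cases are harmless additions to the same argument.
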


%A $d$-dimensional $(0,k)$-polytope $P \subseteq \R^n$ is such that $x_i \in [0,k]$ for $i=1,\dots,n$.

%proved the bound $\delta(P) \le kd$ by applying Lemma~\ref{lem: bounds} with $c = \pm e^i$\note{In the following, I omit $i \in \{1,\dots,d\}$} and by using induction on $d$. The key step of their proof consists in showing that, for any two vertices $u$ and $v$ of $P$, $\delta(u,v) \le \delta(F) + k$, where $F = \{x \in P : x_i = \beta\}$ is a proper face of $P$\note{proper?}. In fact, if $u_i + v_i \le k$, by applying Lemma~\ref{lem: bounds} with $c = e^i$ and $\gamma = \beta \ge 0$, it follows $\delta(u,v) \le \delta(F) + u_i + v_i - 2\gamma \le \delta(F) + k$. If $u_i + v_i \ge k$, Lemma~\ref{lem: bounds} with $c=-e^i$ and $\gamma = - \beta \ge -k$ implies $\delta(u,v) \le  \delta(F) - (u_i + v_i) - 2 \gamma \le \delta(F) + k$. As $F$ is a $(0,k)$-polytope of dimension at most $d-1$, $\delta(u,v) \le \delta(F) + k \le \delta_k^{d-1} + k$.
\skipp

Given a $d$-dimensional $(0,k)$-polytope $P$, Kleinschmidt and Onn prove the bound $\delta(P) \le k d$ by essentially applying Lemma \ref{lem: full dim}, and then Lemma \ref{lem: ind} inductively.
Therefore, their bound uses Lemma \ref{lem: bounds} only with vectors $c=\pm e^i$.
To prove our refined bound, we will use Lemma \ref{lem: bounds} also with different vectors $c$.
%Note that Lemma~\ref{lem: ind} and Lemma~\ref{lem: full dim} are sufficient \note{change} to show the bound by Kleinschmidt and Onn~\cite{KleOnn92} also for non full-dimensional $(0,k)$-polytopes.
We are now ready to give the proof of Theorem~\ref{th: main}.

\skipp

\begin{proof}[Proof of Theorem~\ref{th: main}]
Let $P$ be a $d$-dimensional $(0,k)$-polytope, with $k \ge 2$.
The proof is by induction on $d$.
The base cases are $d=0$ and $d=1$.
The diameter of a $0$-dimensional polytope is clearly zero, and the diameter of a $1$-dimensional polytope is at most one, thus also bounded by $\floor*{k-\frac{1}{2}} = k-1$ since $k \ge 2$.

We now assume $d \ge 2$.
Let $u,v$ be vertices of $P$.
By the induction hypothesis we assume that Theorem~\ref{th: main} is true for $(0,k)$-polytopes of dimension at most $d-1$.
In particular, $\delta_k^{d-1} \le \floor*{\left(k-\frac{1}{2}\right)(d-1)}$, and $\delta_k^{d-2} \le \floor*{\left(k-\frac{1}{2}\right) (d-2)}$.
Thus, in order to prove the inductive step, it is sufficient to show one of the following two inequalities:
\begin{align}
\delta(u,v) & \le \delta_k^{d-1}+k-1,\label{e: d-1}\\
\delta(u,v) & \le \delta_k^{d-2}+2k-1.\label{e: d-2}
\end{align}

\begin{claim}
We can assume that $P$ is full-dimensional.
\end{claim}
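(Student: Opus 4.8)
The plan is to invoke Lemma~\ref{lem: full dim}, which performs exactly this reduction. If $P$ is already full-dimensional there is nothing to prove, so suppose $n > d$. Then Lemma~\ref{lem: full dim} yields a full-dimensional $(0,k)$-polytope $P' \subseteq \R^d$ whose $1$-skeleton is isomorphic to that of $P$. First I would fix such an isomorphism and let $u', v'$ be the vertices of $P'$ corresponding to $u$ and $v$. Since $\delta^P(u,v)$ is by definition the length of a shortest path in the $1$-skeleton of $P$, and the diameter is likewise determined by the $1$-skeleton alone, we obtain $\delta^P(u,v) = \delta^{P'}(u',v')$ and $\delta(P) = \delta(P')$; moreover $P'$ has dimension exactly $d$. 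Hence establishing $\delta^{P'}(u',v') \le \floor*{\left(k-\frac{1}{2}\right)d}$ proves the inductive step for $P$ as well, and we may rename $P'$ back to $P$ and assume $n = d$ from now on.

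I do not expect any real obstacle here: this claim is pure bookkeeping. The reason it is worth isolating is that the subsequent steps of the proof will want to work with genuine coordinate directions of the ambient space — so that Lemma~\ref{lem: ind} and the linear-programming arguments apply cleanly with a coordinate index $i \in \{1,\dots,d\}$ — and carrying a lower-dimensional $P$ around inside a larger $\R^n$ would only obscure this. The only points one must be mildly careful about are that Lemma~\ref{lem: full dim} produces a polytope of dimension \emph{exactly} $d$ (not merely at most $d$) and that a $1$-skeleton isomorphism sends vertices to vertices and preserves adjacency; both are part of the statement of that lemma, so no further work is needed.
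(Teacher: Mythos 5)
Your proposal is correct and matches the paper's argument, which likewise consists of a single appeal to Lemma~\ref{lem: full dim} to replace $P$ by a full-dimensional $(0,k)$-polytope in $\R^d$ with the same $1$-skeleton. The extra remarks about the skeleton isomorphism preserving distances and diameters are exactly the bookkeeping the paper leaves implicit, so nothing is missing and nothing is done differently.
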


\begin{cpf}
By Lemma~\ref{lem: full dim}, there exists a full-dimensional $(0,k)$-polytope in $\R^d$ with the same $1$-skeleton as $P$.
%Hence, from now on, we can assume without loss of generality that $P$ is full-dimensional.
\end{cpf}

\begin{claim}
We can assume that $P$ intersects all facets of the hypercube $[0,k]^d$.
\end{claim}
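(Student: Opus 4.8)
The plan is to show that if $P$ fails to meet one of the $2d$ facets of $[0,k]^d$, then inequality~\eqref{e: d-1} already holds for every pair $u,v$ of vertices, so such polytopes need not be considered and we are free to assume $P$ intersects all of them. First I would record the elementary observation that, since (by the previous claim) $P$ is full-dimensional in $\R^d$ and contained in $[0,k]^d$, the numbers $l_i = \min\{x_i \st x \in P\}$ and $h_i = \max\{x_i \st x \in P\}$ are attained at vertices of $P$ and hence are integers satisfying $0 \le l_i \le h_i \le k$; moreover $P$ meets the facet $\{x_i = 0\}$ precisely when $l_i = 0$, and meets $\{x_i = k\}$ precisely when $h_i = k$.

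Next, suppose $P$ misses some facet of $[0,k]^d$. Then there is an index $i \in \{1,\dots,d\}$ with either $l_i \ge 1$ or $h_i \le k-1$, and by the integrality above either case forces $h_i - l_i \le k-1$. Applying Lemma~\ref{lem: ind} with this coordinate $i$ (note the ambient dimension is $n=d$ here, since $P$ is full-dimensional) yields
$$\delta(P) \le \delta_k^{d-1} + (h_i - l_i) \le \delta_k^{d-1} + (k-1),$$
which is exactly~\eqref{e: d-1} for all vertices $u,v$. Hence the inductive step is already complete for such $P$, and in the remainder of the proof we may assume that $P$ intersects every facet of $[0,k]^d$, i.e.\ that $l_i = 0$ and $h_i = k$ for each $i$.

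I do not expect any real obstacle in this step; it is a direct application of Lemma~\ref{lem: ind}. The only point requiring a little care is the passage from ``$P$ misses the facet'' to the quantitative gap bound $h_i - l_i \le k-1$: this relies on $l_i$ and $h_i$ being \emph{integers}, which in turn is because they are coordinates of vertices of the lattice polytope $P$, and it is here that having reduced to a full-dimensional $P$ in $\R^d$ (so that these vertices are genuine lattice points of $[0,k]^d$) is used.
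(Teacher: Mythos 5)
Your proof is correct and follows essentially the same route as the paper: if $P$ misses a facet of $[0,k]^d$, then some coordinate $i$ satisfies $l_i \ge 1$ or $h_i \le k-1$ (using integrality of vertices), so $h_i - l_i \le k-1$ and Lemma~\ref{lem: ind} gives $\delta(u,v) \le \delta_k^{d-1} + k-1$, which is \eqref{e: d-1}. The paper's own proof of this claim is exactly this application of Lemma~\ref{lem: ind}, stated more tersely.
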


\begin{cpf}
If there exists a facet $G$ of the hypercube $[0,k]^d$ with $P \cap G = \emptyset$, then let $i \in \{1,\dots,d\}$ be such that $l \le x_i \le h$, with $l \ge 1$ or $h \le k-1$.
By Lemma~\ref{lem: ind}, $\delta(u,v) \le \delta_k^{d-1} + k-1$, i.e.~(\ref{e: d-1}) is satisfied.%
%Therefore we now assume that $P \cap G$ is nonempty for every facet $G$ of the hypercube $[0,k]^d$.
\end{cpf}

\noindent
In the remainder of the paper, we will denote by $k^d$ the $d$-dimensional vector with all entries equal to $k$.

\begin{claim}
We can assume that $u+v = k^d$.
\end{claim}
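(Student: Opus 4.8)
The plan is to reduce to this special case by showing that whenever $u+v\ne k^d$, one of the two target inequalities~\eqref{e: d-1} or~\eqref{e: d-2} already holds, so there is nothing left to prove in that situation. Since $u$ and $v$ both lie in $[0,k]^d$, the vector $u+v$ lies in $[0,2k]^d$, and $u+v\ne k^d$ means there is a coordinate $i$ with $u_i+v_i\ne k$. First I would fix such an $i$ and set $l=\min\{x_i : x\in P\}$ and $h=\max\{x_i : x\in P\}$. By Claim~2 we may assume $P$ meets every facet of $[0,k]^d$, so $l=0$ and $h=k$; in particular $u_i+v_i\ne k$ becomes $u_i+v_i\ne l+h$.

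Next I would apply Lemma~\ref{lem: ell} with the sharper choice of face dictated by the sign of $u_i+v_i-(l+h)$, exactly as in the paragraph preceding Lemma~\ref{lem: ind}, but keeping the true (non-extreme) values of the bound. If $u_i+v_i<l+h$, take $F=L=\{x\in P: x_i=l\}$, $c=e^i$, $\gamma=l$; then Lemma~\ref{lem: ell} gives $\delta(u,v)\le \delta(L)+u_i+v_i-2l$. If $u_i+v_i>l+h$, take $F=H=\{x\in P: x_i=h\}$, $c=-e^i$, $\gamma=-h$; then $\delta(u,v)\le \delta(H)+2h-u_i-v_i$. In the first case $u_i+v_i-2l\le l+h-1-2l=h-l-1=k-1$ (using integrality of $u_i+v_i$ and $u_i+v_i<l+h=k$), and in the second case $2h-u_i-v_i\le 2h-(l+h+1)=h-l-1=k-1$. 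Since $L$ and $H$ are $(0,k)$-polytopes of dimension at most $d-1$, both $\delta(L)$ and $\delta(H)$ are at most $\delta_k^{d-1}$. Hence $\delta(u,v)\le \delta_k^{d-1}+k-1$, which is precisely~\eqref{e: d-1}, and the inductive step is complete in this case.

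Therefore, for the rest of the argument we may assume $u_i+v_i=k$ for every coordinate $i$, i.e.\ $u+v=k^d$, which is the assertion of the claim. The routine points to check are that $l=0$ and $h=k$ really follow from Claim~2 (so that the gap $h-l$ equals $k$), and the integrality step that upgrades a strict inequality between integers to a decrease of at least one; neither is delicate. The only mild subtlety — and the step I would treat most carefully — is making sure the bound coming out of Lemma~\ref{lem: ell} is the \emph{tight} one $\delta(F)+u_i+v_i-2\gamma$ rather than the weakened $\delta(F)+(h-l)$ version used to state Lemma~\ref{lem: ind}, since it is exactly the slack $u_i+v_i-2l<h-l$ (resp.\ $2h-u_i-v_i<h-l$) that lets us land on $k-1$ instead of $k$.
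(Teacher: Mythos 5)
Your argument is correct and is essentially the paper's own proof: pick a coordinate $i$ with $u_i+v_i\le k-1$ or $u_i+v_i\ge k+1$ (integrality), apply Lemma~\ref{lem: ell} with $c=e^i$ or $c=-e^i$ to the minimizing face, and conclude $\delta(u,v)\le\delta_k^{d-1}+k-1$, i.e.~\eqref{e: d-1}. The only cosmetic difference is that you invoke the previous claim to force $l=0$ and $h=k$, while the paper gets the same bound directly from $\gamma\ge 0$ (resp.\ $-\gamma\le k$), which already follows from $P\subseteq[0,k]^d$; since that claim precedes this one, your use of it is harmless.
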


\begin{cpf}
If $u+v \neq k^d$, there exists an index $i \in \{1,\dots,d\}$ such that $u_i+v_i \le k-1$ or $u_i+v_i \ge k+1$. 
By Lemma~\ref{lem: ell} applied with $c = e^i$ or $c=-e^i$, respectively, we obtain $\delta(u,v) \le \delta(F) + k-1$, where $F$ is the face of $P$ that minimizes $cx$.
As $F$ is a $(0,k)$-polytope of dimension at most $d-1$, we have $\delta(F) \le \delta_k^{d-1}$, therefore
$\delta(u,v) \le \delta_k^{d-1} + k-1$, i.e.~(\ref{e: d-1}) is satisfied.
%Therefore we now assume that $u+v = k^d$.
\end{cpf}

\begin{claim} \label{c: no intermediate}
We can assume that $u \in \{0,k\}^d$.
\end{claim}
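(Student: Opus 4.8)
The plan is to assume, toward a contradiction with the need to reduce, that $u \notin \{0,k\}^d$, and to derive one of the inequalities~(\ref{e: d-1}),~(\ref{e: d-2}), exactly in the spirit of the preceding claims. Fix a coordinate $i$ with $1 \le u_i \le k-1$. Since we have already reduced to $u+v=k^d$, it follows that $v_i = k-u_i$, so $1 \le v_i \le k-1$ as well. Let $L=\{x\in P: x_i=0\}$ and $H=\{x\in P: x_i=k\}$; since $P$ meets all facets of $[0,k]^d$ these faces are nonempty and $\min\{x_i:x\in P\}=0$, $\max\{x_i:x\in P\}=k$. Lemma~\ref{lem: bounds} with $c=e^i$ gives $\delta(u,L)\le u_i$ and $\delta(v,L)\le v_i$, hence $\delta(u,v)\le \delta(u,L)+\delta(v,L)+\delta(L)\le k+\delta(L)$; symmetrically, with $c=-e^i$, $\delta(u,v)\le k+\delta(H)$. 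I view $L$ and $H$ as $(0,k)$-polytopes in $\R^{d-1}$, dropping coordinate $i$. If $\dim L\le d-2$, then $\delta(L)\le \delta_k^{d-2}$ and, since $k\ge 1$, $\delta(u,v)\le \delta_k^{d-2}+k\le \delta_k^{d-2}+2k-1$, i.e.~(\ref{e: d-2}); if $\dim L=d-1$ but $L$ misses a facet of $[0,k]^{d-1}$, then Lemma~\ref{lem: ind} applied to $L$ gives $\delta(L)\le \delta_k^{d-2}+(k-1)$, and again $\delta(u,v)\le \delta_k^{d-2}+2k-1$. The same arguments applied to $H$ dispose of the corresponding cases.

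There remains the case in which $L$ and $H$ are both full-dimensional facets of $P$ meeting every facet of $[0,k]^{d-1}$; here $\delta(L)$ may be as large as $\delta_k^{d-1}$, so the estimate $\delta(u,v)\le k+\delta(L)$ falls exactly one unit short of $\floor*{(k-\tfrac12)d}$, and this is the main obstacle. To recover that unit I would route $u$ and $v$ into $L$, reaching vertices $u',v'$ of $L$ with $\delta(u,u')\le u_i$ and $\delta(v,v')\le v_i$, and then bound $\delta_L(u',v')$ rather than $\delta(L)$. Re-running the reduction of this claim one dimension down inside $L$: if $u'_j+v'_j\ne k$ for some coordinate $j$ of $L$, then Lemma~\ref{lem: ell} applied in $L$ with $c=\pm e^j$ yields $\delta_L(u',v')\le \delta_k^{d-2}+(k-1)$, so $\delta(u,v)\le u_i+\delta_L(u',v')+v_i\le k+\delta_k^{d-2}+(k-1)$, i.e.~(\ref{e: d-2}).

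The genuinely hard sub-case is therefore when the landing points are forced to be antipodal in $L$, namely $u'+v'=k^{d-1}$. To break this I would exploit the freedom left by Lemma~\ref{lem: bounds} in the choice of the descent from $u$ (and from $v$) to $L$, or — better — route directly into the lower-dimensional faces $\{x\in P: x_i=0,\ x_j=a\}$ of $L$ with $a\in\{0,k\}$ by applying Lemma~\ref{lem: bounds} with the combined vectors $c=\pm e^i\pm e^j$; these are legitimate precisely because $L$ meets all facets of $[0,k]^{d-1}$, so the relevant minima of $x_i\pm x_j$ are attained on these faces. Choosing $a$ so as to land at non-antipodal vertices of $L$, one concludes again via~(\ref{e: d-2}). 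I expect the bookkeeping of this last step — making sure the combined descents cost no more than $u_i$ and $v_i$ respectively while still breaking antipodality — to be the delicate part.
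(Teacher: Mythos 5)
Your reduction is fine in the easy cases (when $L=\{x\in P\st x_i=0\}$ or $H=\{x\in P\st x_i=k\}$ has dimension at most $d-2$ or misses a facet of $[0,k]^{d-1}$, and when the landing vertices $u',v'$ in $L$ are not antipodal), but the claim is not proved: the sub-case you yourself flag as ``genuinely hard'' is exactly where the whole difficulty lives, and the sketch you offer for it does not work as stated. First, the budget bookkeeping you defer is not a technicality but the obstruction: Lemma~\ref{lem: bounds} with $c=e^i+e^j$ bounds the distance from $u$ to the face $F$ minimizing $x_i+x_j$ only by $u_i+u_j-\gamma$, which can be as large as $u_i+k$, so there is no reason the combined descent should ``cost no more than $u_i$''. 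Second, the face attaining $\gamma=\min\{x_i+x_j\st x\in P\}$ need not meet $\{x\in P\st x_i=0,\ x_j=a\}$ for $a\in\{0,k\}$, even when $L$ meets every facet of $[0,k]^{d-1}$, so routing ``into those faces'' is not legitimate. Finally, note that applying Lemma~\ref{lem: ell} directly to $u$ and $v$ with $c=e^i+e^j$ gives, using $u+v=k^d$, only $\delta(u,v)\le\delta(F)+2k-2\gamma$, and even with the best available bound $\delta(F)\le\delta_k^{d-2}+\gamma$ this yields $\delta_k^{d-2}+2k$, one more than~(\ref{e: d-2}); so no choice of a combined direction applied from $u$ itself can close the gap.

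The missing idea, which is the heart of the paper's proof, is to exploit that $u$ is a \emph{vertex} before descending. Since both $\{x\in P\st x_i=0\}$ and $\{x\in P\st x_i=k\}$ are nonempty, $u$ has neighbors $s,t$ with $s_i\le u_i-1$ and $t_i\ge u_i+1$; if $s$ and $t$ agreed with $u$ in every other coordinate, $u$ would be a convex combination of $s$ and $t$, a contradiction. Hence $u$ has a neighbor $w$ differing from $u$ in coordinate $i$ and in a second coordinate $j\ne i$, and after sign flips one may assume $w_i\le u_i-1$, $w_j\le u_j-1$. Applying Lemma~\ref{lem: ell} to $w$ and $v$ with $c=e^i+e^j$ gives $\delta(w,v)\le\delta(F)+2k-2-2\gamma$, and $\delta(F)\le\delta_k^{d-2}+\gamma$ follows by projecting out coordinate $j$ (Lemma~\ref{lem: full dim}) and applying Lemma~\ref{lem: ind}, because $x_i\le\gamma$ on $F$. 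The single edge from $u$ to $w$ buys a gain of two in the $c$-value, and the $-2\gamma$ from Lemma~\ref{lem: ell} absorbs the $+\gamma$ in the bound on $\delta(F)$; this is precisely what delivers $\delta(u,v)\le\delta_k^{d-2}+2k-1$. Without some device of this kind your antipodal sub-case remains open, so the proposal as written has a genuine gap.
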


\begin{cpf}
Assume that $u$ has one component $u_i$, $i \in \{1,\dots,d\}$, with $1 \le u_i \le k-1$.
In this case we show that \eqref{e: d-2} is satisfied.
Since the set $\{x \in P \st x_i=0\}$ is nonempty, there exists a neighbor $s$ of $u$ with $s_i < u_i$ 
%This can be seen, for example, by applying the simplex algorithm to minimize $x_i$ on $P$ starting from vertex $u$.
(see, e.g., \cite{Bro83}).
By the integrality of $s$ and $u$, this implies $s_i \le u_i-1$.
Symmetrically, since the set $\{x \in P \st x_i=k\}$ is nonempty, $u$ has a neighbor $t$ with $t_i \ge u_i+1$.
If $s_j = t_j = u_j$ for all $j \in \{1,\dots,d\}$, $j \neq i$, then by setting $\lambda = \frac{t_i - u_i}{t_i - s_i}$ we have $\lambda s + (1-\lambda) t = u$, contradicting the fact that $u$ is a vertex of $P$.
Thus, there exists an index $j \in \{1,\dots,d\}$ with $j \neq i$ such that either $s_j \neq u_j$ or $t_j \neq u_j$.
Therefore there exists a neighbor $w$ of $u$ such that $w_i \neq u_i$ and $w_j \neq u_j$, for distinct indices $i,j \in \{1,\dots,d\}$
(see Fig.~\ref{f: intermediate vertices}$(i)$).

We assume without loss of generality that $w_i < u_i$ (if not, we can perform the change of variable $\tilde x_i = k - x_i$).
Analogously, we assume $w_j < u_j$.
As $u + v = k^d$, we have $w_i+w_j+v_i+v_j \le 2k-2$.
Let $\gamma=\min \{x_i+x_j \st x \in P\}$ and $F =\{x \in P \st x_i+x_j=\gamma\}$.
By Lemma~\ref{lem: ell} (with $c = e^i+e^j$), $\delta(w,v) \le \delta(F) + w_i+w_j+v_i+v_j - 2\gamma\le \delta(F) + 2k-2 - 2\gamma$
(see Fig.~\ref{f: intermediate vertices}$(ii)$).

We now show that $\delta(F) \le \delta_k^{d-2} + \gamma$.
Let $\bar F$ be the projection of $F$ onto the $j$-coordinate hyperplane.
$\bar F$ is a $(0,k)$-polytope in $\R^{d-1}$ and, 
by
Lemma~\ref{lem: full dim}, $\bar F$ has the same $1$-skeleton of $F$.
Note that, for any $x \in F$, $x_i=\gamma-x_j$ and $x_j \ge 0$ imply $x_i \le \gamma$.
Therefore, $x_i \le \gamma$ for any $x \in \bar F$.
Then, by Lemma~\ref{lem: ind}, $\delta(\bar F) \le \delta_k^{d-2} + \gamma$, thus $\delta(F) \le \delta_k^{d-2} + \gamma$.

This implies $\delta(w,v) \le  \delta_k^{d-2} + 2k -2 - \gamma$ and,
since  $\gamma \ge 0$ and $\delta(u,w) = 1$, finally $\delta(u,v) \le \delta(u,w) + \delta(w,v) \le \delta_k^{d-2} + 2k -1$, i.e.~(\ref{e: d-2}) is satisfied.
%Thus we now assume that %$u \in \{0,k\}^d$.
%for every $i \in \{1,\dots,d\}$ we have $u_i \in \{0,k\}$.
\end{cpf}

By possibly performing the change of variable $\tilde x_1 = k - x_1$, we can further assume without loss of generality that $u_1=k$, and $v_1=0$.

Let $F$ be the face of $P$ defined by $F = \{ x \in P \st x_1 = 0 \}$.
$F$ is a $(0,k)$-polytope of dimension at most $d-1$, thus $\delta(F) \le \delta_k^{d-1}$.
By Lemma~\ref{lem: bounds} (with $c=e^1$), there exists a vertex $u'$ of $F$ such that $\delta(u,u') \le k$.
Observe that both $u'$ and $v$ lie in $F$ and therefore $\delta(u',v) \le \delta_k^{d-1}$.

If $u' = (0,u_2,\dots,u_d)$, then $u$ and $u'$ are adjacent vertices of the hypercube $[0,k]^d$, implying that $\conv\{u,u'\}$ is an edge of $[0,k]^d$ 
(see Fig.~\ref{f: vertices hc}$(i)$).
As $P$ is convex and it is contained in $[0,k]^d$, it follows that $\conv\{u,u'\}$ is also an edge of $P$. Therefore, $\delta(u,u') = 1$ and consequently $\delta(u,v) \le \delta_k^{d-1} + 1$. As $k \ge 2$, it follows $\delta(u,v) \le \delta_k^{d-1} + k -1$, i.e.~(\ref{e: d-1}) is satisfied.

Thus we now assume $u' \neq (0,u_2,\dots,u_d)$ (see Fig.~\ref{f: vertices hc}$(ii)$).
Then, there exists an index $i \in \{2,\dots,d\}$ such that $u'_i+v_i \le k-1$ or $u'_i+v_i \ge k+1$.
We assume without loss of generality that $u'_i+v_i \le k-1$ (if not, we can perform the change of variable $\tilde x_i = k - x_i$).
Let $\gamma=\min \{x_i \st x \in F\}$, $F' =\{x \in F \st x_i=\gamma\}$.
$F'$ is a $(0,k)$-polytope, and it has dimension at most $d-2$ because it is contained in the intersection of the two linearly independent hyperplanes $\{x \in \R^d \st x_1=0\}$ and $\{x \in \R^d \st x_i=\gamma\}$.
It follows that $\delta(F') \le \delta_k^{d-2}$.
Then, by applying Lemma~\ref{lem: ell} to the polytope $F$ and the vertices $u'$ and $v$, we have $\delta(u',v) \le \delta(F') + u'_i+v_i \le \delta_k^{d-2} + k-1$.
This implies $\delta(u,v) \le \delta(u,u') + \delta(u',v) \le \delta_k^{d-2}+2k-1$, i.e.~(\ref{e: d-2}) is satisfied.
\end{proof}
%\qed

%The proof of Theorem~\ref{th: main} is constructive, and it yields an algorithm that finds a path between two given vertices, whose length does not exceed $\floor*{\left(k-\frac{1}{2}\right)d}$.
%For a $(0,k)$-polytope $P = \{x : Ax \le b\}$, such algorithm runs in time bounded by a polynomial in $k$, and in the size of $A,b$.

\section{Further directions}%
%\label{sec: conclusions}
%The main contribution of this work is an upper bound of $\floor*{\frac{3}{2}d}$ for the diameter of $d$-dimensional half-integral polytopes. This bound is tight, as for any $d$ we can explicitly construct a $d$-dimensional polytope of diameter $\floor*{\frac{3}{2}d}$.
%Our upper bound for half-integral polytopes derives from a more general bound of $\floor*{\left(k-\frac{1}{2}\right)d}$ on the diameter of $(0,k)$-polytopes. The latter bound refines the known bound of $kd$ given by Kleinschmidt and Onn~\cite{KleOnn92}, and yields an algorithm to build a short path between two given vertices of a $(0,k)$-polytope $P = \{x : Ax \le b\}$, whose running time is polynomial in $k$, and in the size of $A,b$.
%
%
Both our upper bound and the one by Kleinschmidt and Onn are not tight for $k \ge 3$. As an example, $\delta^2_3 =4$, as the maximum diameter of a lattice polygon in $[0,3]^2$ is realized by the octagon. It seems that our approach cannot be easily refined to obtain a tight upper bound for general $k$.

An interesting direction of research is to study the asymptotic behavior of the function $\delta_k^d$.
It is known that the maximum number of vertices of a 2-dimensional $(0,k)$-polytope is in $\Theta (k^{2/3})$ \cite{BalBar91}, which implies the asymptotically tight bound $\delta_k^{2} \in \Theta (k^{2/3})$.
%By Lemma~\ref{lem: ind}, $\delta_k^3 \le k + O (k^{2/3})$.
%
%By applying Lemma~\ref{lemma: diameter cartesian}, we can derive a lower bound for $\delta^d_k$, i.e.~the maximum diameter of a $d$-dimensional $(0,k)$-polytope, by taking the cartesian product of $\floor*{\frac{d}{2}}$ copies of a lattice polygon of diameter $\delta^2_k$. This yields $\delta^d_k \ge \floor*{\frac{d}{2}} \delta^2_k$.
%
%As mentioned in Remark \ref{remark}, the results in \cite{BalBar91} concerning the maximum number of vertices of a lattice polygon in $[0,k]^2$ imply that $\delta_k^{2} \in O (k^{2/3})$.
%Interestingly, in \cite{BalBar91} it is also shown that $\delta_k^{2} \in \Omega (k^{2/3})$,
%i.e.~for large $k$ there exist a constant $c$ and a lattice polygon $P \subseteq [0,k]^2$ such that $\delta(P) \ge c \cdot k^{2/3}$.
%
Using cartesian products of polytopes, it follows that $\delta_k^{d} \in \Omega (k^{2/3}d)$.
This provides an asymptotic lower bound on $\delta_k^{d}$ that is a fractional power with respect to $k$ and linear in $d$.
However, the best upper bound on $\delta_k^{d}$ is linear both in $k$ and in $d$.
In other words, there is still a significant gap between the lower and the upper bound.
%
%An open question concerns how to close, or at least to reduce, this gap.

\bibliographystyle{plain.bst}
\bibliography{biblio}

\begin{thebibliography}{10}

\bibitem{AbeRot94}
H.G. Abeledo and U.G. Rothblum.
\newblock Stable matchings and linear inequalities.
\newblock {\em Discrete Applied Mathematics}, 54:1--27, 1994.

\bibitem{Bal65}
M.L. Balinski.
\newblock Integer programming: methods, uses, computation.
\newblock {\em Management Science Series A}, 12:253--313, 1965.

\bibitem{BalBar91}
A.~Balog and I.~B\'{a}r\'{a}ny.
\newblock On the convex hull of the integer points in a disc.
\newblock In {\em Proceedings of the seventh annual symposium on Computational
  geometry}, SCG '91, pages 162--165, New York, NY, USA, 1991. ACM.

\bibitem{BonDiSEisHanNie14}
N.~Bonifas, M.~Di~Summa, F.~Eisenbrand, N.~H\"ahnle, and M.~Niemeier.
\newblock On sub-determinants and the diameter of polyhedra.
\newblock {\em Discrete and Computational Geometry}, pages 1--14, 2014.

\bibitem{Bro83}
A.~Br{\o}nsted.
\newblock {\em An Introduction to Convex Polytopes}.
\newblock Springer-Verlag, Berlin, New York, 1983.

\bibitem{CheDinHuZan12}
X.~Chena, G.~Ding, X.~Hu, and W.~Zang.
\newblock The maximum-weight stable matching problem: Duality and efficiency.
\newblock {\em SIAM Journal on Discrete Mathematics}, 26(3):1346--1360, 2012.

\bibitem{GijPap13}
D.~Gijswijt and G.~Pap.
\newblock An algorithm for weighted fractional matroid matching.
\newblock {\em Journal of Combinatorial Theory, series B}, 103:509--520, 2013.

\bibitem{KleOnn92}
P.~Kleinschmidt and S.~Onn.
\newblock On the diameter of convex polytopes.
\newblock {\em Discrete Mathematics}, 102:75--77, 1992.

\bibitem{DezLau97}
M.~Laurent.
\newblock {\em Geometry of Cuts and Metrics}.
\newblock Algorithms and Combinatorics. Springer, 1997.

\bibitem{Nad89}
D.J. Naddef.
\newblock The {H}irsch conjecture is true for $(0,1)$-polytopes.
\newblock {\em Mathematical Programming}, 45:109--110, 1989.

\bibitem{NadPul84}
D.J. Naddef and W.R. Pulleyblank.
\newblock Hamiltonicity in $(0,1)$-polyhedra.
\newblock {\em Journal of Combinatorial Theory, Series B}, 37(1):41--52, 1984.

\bibitem{Pad89}
M.~Padberg.
\newblock The boolean quadric polytope: Some characteristics, facets and
  relatives.
\newblock {\em Mathematical Programming}, 45:139--172, 1989.

\bibitem{Sch86}
A.~Schrijver.
\newblock {\em Theory of Linear and Integer Programming}.
\newblock Wiley, Chichester, 1986.

\bibitem{Sch03}
A.~Schrijver.
\newblock {\em Combinatorial Optimization. Polyhedra and Efficiency}.
\newblock Springer-Verlag, Berlin-Heidelberg, 2003.

\end{thebibliography}
\begin{figure}[h]
\begin{center}
  \begin{tabular}{cc}
    \resizebox{6cm}{!}{\begin{tikzpicture}
\coordinate (o) at (0,0);
\coordinate (o1) at (4,0);
\coordinate (o2) at (0,4);

\coordinate (fake) at (-1.75,4);
\node (f) [above left] at (fake) {$\invisible{f}$};

\coordinate (v) at (2,0);
\coordinate (u) at (1,3);
\coordinate (u1) at (0,2);

\draw[] (o) -- (3,0) -- (3,3) -- (0,3) -- cycle;

\begin{scope}
\clip (0, 3) rectangle (3, 0);
\end{scope}

\draw[->,thick] (o) -- (o1);
\draw[->,thick] (o) -- (o2);
\node [below right] at (o1) {$x_i$};
\node [above left] at (o2) {$x_j$};

\node [above] at (u) {$u$};
\node [below] at (v) {$v$};
\node [left] at (u1) {$w$};

\lattice{3}
\bdot{v}
\bdot{u}
\bdot{u1}
\end{tikzpicture}}  & \resizebox{6cm}{!}{\begin{tikzpicture}
\coordinate (o) at (0,0);
\coordinate (o1) at (4,0);
\coordinate (o2) at (0,4);

\coordinate (fake) at (-1.75,4);
\node (f) [above left] at (fake) {$\invisible{f}$};

\coordinate (v) at (2,0);
\coordinate (u) at (1,3);
\coordinate (u1) at (0,2);

\draw[] (o) -- (3,0) -- (3,3) -- (0,3) -- cycle;

\begin{scope}
\clip (0, 3) rectangle (3, 0);
\levels{1}{1}{3}{red}
\end{scope}

\Square{o}
\node [below left] at (-0.1,-0.1) {\red{$F$}};

\draw[->,thick] (o) -- (o1);
\draw[->,thick] (o) -- (o2);
\node [below right] at (o1) {$x_i$};
\node [above left] at (o2) {$x_j$};

\node [above] at (u) {$u$};
\node [below] at (v) {$v$};
\node [left] at (u1) {$w$};

\lattice{3}
\draw[>=latex,->,ultra thick, blue] (u)--(0,2);
\draw[>=latex,->,ultra thick, blue] (0,2) -- (0, 1);
\draw[>=latex,->,ultra thick, blue] (0, 1)--(0,0);

\draw[>=latex,->,ultra thick, blue] (2, 0)--(1,0);
\draw[>=latex,->,ultra thick, blue] (1, 0)--(0,0);
\bdot{v}
\bdot{u}
\bdot{u1}
\end{tikzpicture}} \\ 
    $(i)$ & $(ii)$\\
  \end{tabular}
\end{center}
\caption{
%A pictorial representation of Claim \ref{c: no intermediate}
In Claim \ref{c: no intermediate}, 
$(i)$ we construct a neighbor $w$ of $u$ with $w_i < u_i$, and $w_j < u_j$,
$(ii)$ we use Lemma~\ref{lem: ell} with $c = e^i+e^j$ to show that $\delta(w,v) \le  \delta_k^{d-2} + 2k -2$.
}
\label{f: intermediate vertices}
\end{figure}
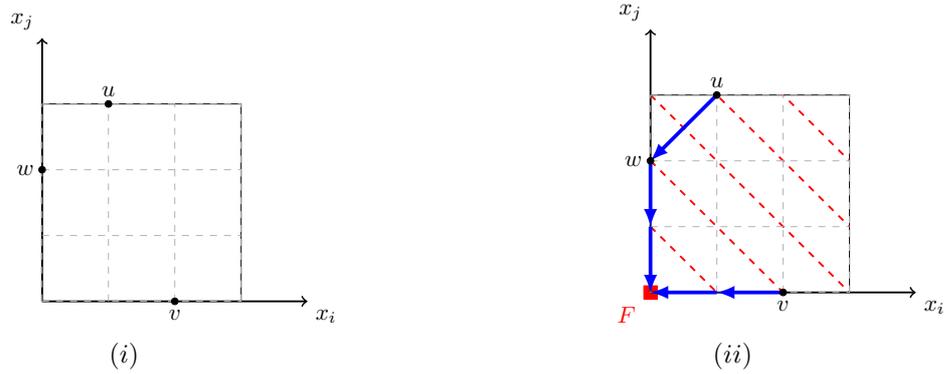
\begin{figure}[h]
\begin{center}
  \begin{tabular}{cc}
    \resizebox{6cm}{!}{\begin{tikzpicture}
\coordinate (a) at (-3,0);
\coordinate (b) at (-4.75,1);
\coordinate (c) at (-4.75,4);
\coordinate (d) at (-3,3);
\coordinate (e) at (-0,0);
\coordinate (f) at (-1.75,1);
\coordinate (g) at (-1.75,4);
\coordinate (h) at (-0,3);

\coordinate (x1) at (1,0);

\draw[->,thick] (a)--(x1);
\node [below right] at (x1) {$x_1$};

\draw[] (e) -- (a) -- (d) -- (h) -- cycle;
\draw[] (h) -- (d) -- (c) -- (g) -- cycle;

\draw[fill=red!15] (a) -- (b) -- (c) -- (d) -- cycle;
%\node [] at (-8,5) {\Large{$\{x : x_1 = 0\}$}};
\node [] at (-4,2.5) {$\red{F}$};

\draw[thin,black!30] (e)--(f);
\draw[thin,black!30] (f)--(b);
\draw[thin,black!30] (g)--(f);
\draw[red,dashed,thick] (-2,0) -- (-3.75,1) -- (-3.75,4) -- (-2,3) -- cycle;
\draw[red,dashed,thick] (-1,0) -- (-2.75,1) -- (-2.75,4) -- (-1,3) -- cycle;
\draw[red,dashed,thick] (e) -- (f) -- (g) -- (h) -- cycle;

\node (u) [below right] at (0,0) {$u$};
\node (v) [above left] at (c) {$v$};

%\node (w) [left] at (-7.17,2.3) {\huge{$u'$}};
%\bdot{-7.17,2.3}
\node (z) [left] at (-3,-0.08) {$u'$};
\bdot{-3,0}
\bdot{e}
\bdot{c}

%\draw[blue,ultra thick] (0,0)--(-2,0);
%\bdot{-2,0}
%\draw[blue,ultra thick] (-2,0)-- (-4,2);
%\bdot{-4,2}
%\draw[blue,ultra thick] (-4,2)-- (-7.17,2.3);

\draw[>=latex,->,blue,ultra thick] (0,0)--(-3,0);
%\draw[>=latex,->,blue,ultra thick] (-1,0)-- (-2,0);
%\draw[>=latex,->,blue,ultra thick] (-2,0)-- (-3,0);
\bdot{0,0}
\bdot{-1,0}
\bdot{-2,0}

\end{tikzpicture}}  & \resizebox{6cm}{!}{\begin{tikzpicture}
\coordinate (a) at (-3,0);
\coordinate (b) at (-4.75,1);
\coordinate (c) at (-4.75,4);
\coordinate (d) at (-3,3);
\coordinate (e) at (-0,0);
\coordinate (f) at (-1.75,1);
\coordinate (g) at (-1.75,4);
\coordinate (h) at (-0,3);

\coordinate (x1) at (1,0);

\draw[->,thick] (a)--(x1);
\node [below right] at (x1) {$x_1$};

\draw[] (e) -- (a) -- (d) -- (h) -- cycle;
\draw[] (h) -- (d) -- (c) -- (g) -- cycle;

\draw[fill=red!15] (a) -- (b) -- (c) -- (d) -- cycle;
%\node [] at (-8,5) {\Large{$\{x : x_1 = 0\}$}};
\node [] at (-4,2.5) {$\red{F}$};
\draw[thin,black!30] (e)--(f);
\draw[thin,black!30] (f)--(b);
\draw[thin,black!30] (g)--(f);
\draw[red,dashed,thick] (-2,0) -- (-3.75,1) -- (-3.75,4) -- (-2,3) -- cycle;
\draw[red,dashed,thick] (-1,0) -- (-2.75,1) -- (-2.75,4) -- (-1,3) -- cycle;
\draw[red,dashed,thick] (e) -- (f) -- (g) -- (h) -- cycle;

\node (u) [below right] at (0,0) {$u$};
\node (v) [above left] at (c) {$v$};

\node (w) [above] at (-3.5,1.15) {$u'$};
\bdot{-3.585,1.15}
%\node (z) [below left] at (a) {\huge{$u'$}};
%\bdot{-6,0}
\bdot{e}
\bdot{c}

\draw[>=latex,->,blue,ultra thick] (0,0)--(-1,0);
\bdot{-1,0}
\draw[>=latex,->,blue,ultra thick] (-1,0)-- (-2,1);
\bdot{-2,1}
\draw[>=latex,->,blue,ultra thick] (-2,1)-- (-3.585,1.15);

%\draw[green,ultra thick] (0,0)--(-2,0);
%\draw[green,ultra thick] (-2,0)-- (-4,0);
%\draw[green,ultra thick] (-4,0)-- (-6,0);
%\bdot{0,0}
%\bdot{-2,0}
%\bdot{-4,0}

\end{tikzpicture}} \\ 
    $(i)$ & $(ii)$\\
  \end{tabular}
\end{center}
\caption{
To bound the distance between vertices $u \in \{0,k\}^d$ with $u_1 = k$ and $v = k^d -u$,
we construct a path from $u$ to a vertex $u'$ with $u'_1 = 0$. There are two cases:
$(i)$ $u' = (0,u_2,\dots,u_d)$, thus $\delta(u,u') =1$ and $\delta(u',v) \le \delta_k^{d-1}$;
$(ii)$ $u' \neq (0,u_2,\dots,u_d)$, thus $\delta(u,u') \le k$ and $\delta(u',v) \le \delta_k^{d-2} + k -1$.
}
\label{f: vertices hc}
\end{figure}
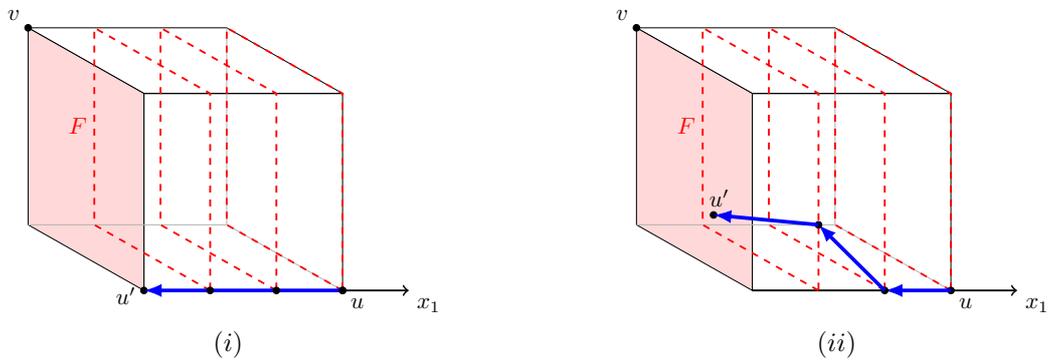
\end{document}